\newtheorem{theorem}{Theorem}
\newtheorem{corollary}[theorem]{Corollary}
\newtheorem{lemma}[theorem]{Lemma}
\newcommand{\mc}{\mathcal}
\newcommand{\mr}{\mathrm}
\definecolor{darkblue}{rgb}{0.0, 0.0, 0.55}
\begin{document}

\title{Universal spin models are universal approximators in machine learning}

\author{Tobias Reinhart}
\email{tobias.reinhart@uibk.ac.at}
\affiliation{Institute for Theoretical Physics, University of Innsbruck,\\ Technikerstr.\ 21a,\ A-6020 Innsbruck, Austria}

\author{Gemma De les Coves}
\affiliation{Departament d'Enginyeria, Universitat Pompeu Fabra, Carrer T\`anger 122, 08018 Barcelona, Catalonia}
\affiliation{ICREA, Instituci\'o Catalana d'Estudis i Recerca Avan\c{c}ats, Passeig Llu\'is Companys 23, 08010 Barcelona, Catalonia }

\begin{abstract}
One of the theoretical pillars that sustain certain machine learning models are universal approximation theorems, which prove that they can approximate all functions from a function class to arbitrary precision. Independently, classical spin models are termed universal if they can reproduce the behavior of any other spin model in their low energy sector. Universal spin models have been characterized via sufficient and necessary conditions, showing that simple models such as the 2d Ising with fields are universal. In this work, we prove that universal spin models are universal approximators of probability distributions. This enables us to leverage the characterization of the former to reveal conditions which are sufficient for universal approximation. Deriving universal approximation theorems thus amounts to verifying these conditions, yielding a unified recipe for universal approximation theorems applicable to a wide range of models. We explicitly test this recipe for restricted and deep Boltzmann machines, as well as for deep belief networks. This work illustrates that independently discovered universality statements may be intimately related, enabling the transfer of results. 
\keywords{}
\end{abstract}

\maketitle

Spin models are essentially generalizations of the Ising model \cite{Is25}. 
They are toy models of complex systems that consist of many interacting, classical degrees of freedom, called spins. 
Their versatile nature has made them amenable for problems not only in physics, but also in computational complexity, biology or the social sciences (see e.g.\ \cite{Re21c} for a list of applications), as well as machine learning. 
In the latter, they give rise to energy based models, such as restricted Boltzmann machines (RBMs) or deep Boltzmann machines (DBMs) \cite{Os19}, where one is interested in  the Boltzmann distribution, which describes the behavior of the spin model at thermal equilibrium.
This probability distribution is parameterized by couplings that represent  interactions between spins. The learning task consists of finding parameters such that the output  distribution in the so-called visible spins approximates a given target distribution. 

Several machine learning models, including RBMs and DBMs, are theoretically supported by universal approximation theorems (UATs), which prove that they can approximate any function from a given class to arbitrary precision \cite{LR08, Ro10, Mo15, Mo17}.
In particular, the UATs for RBMs and DBMs guarantee that both can approximate arbitrary probability distributions. In this sense, they are universal approximators of probability distributions.

Independently, certain spin models
(such as the 2d Ising model with fields or the 3d Ising model) have been proven to be universal, meaning that their low energy sector can reproduce the behavior of any other spin model \cite{De16b, Re24}. 
While universal spin models have been fully characterized in terms of sufficient and necessary conditions, to the best of our knowledge no such characterization exists for universal approximators in machine learning (cf.\ \cite{Mo17}). 

\begin{figure}[t]
    \centering
    \includegraphics{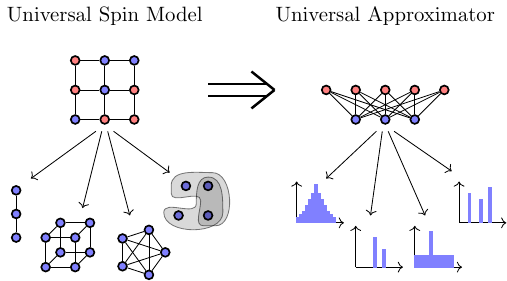}
    \caption{ 
    We prove that universal spin models are universal approximators of probability distributions in machine learning. 
    We leverage the characterization of the former \cite{Re24} to derive UATs for the latter. We explicitly do so for RBMs, DBMs and DBNs. 
    }
    \label{fig:main}
\end{figure}

In this paper, we prove that universal spin models are universal approximators of probability distributions by leveraging  \cite{Re24} (see \cref{fig:main}). It follows that the characterization of the former extends to the latter, yielding a characterization of universal approximators for energy based models \footnote{That is machine learning models, defined by a Hamiltonian the corresponding Boltzmann distribution, examples are DBMs and RBMs.} 
and in turn a unified recipe for the derivation of UATs. 
We apply this recipe to RBMs and DBMs, resulting in proofs of the respective UATs. As a first step beyond energy based models, we apply the procedure to derive a UAT for deep belief networks (DBNs). Overall, we find that universality statements of spin models and some machine learning models are intimately connected.

\emph{Spin systems, spin models and simulations}.---
We shall consider the following simplified versions of definitions of \cite{Re24}. 
\emph{Spin models} are potentially infinite sets of spin systems. 
\emph{Spin systems} consist of finitely many classical, binary degrees of freedom, or `spins' interacting according to a hypergraph $(V,E)$. Each spin is associated to a vertex $v\in V$, and each interaction between spins is associated to a hyperedge  $e\in E$ and given by a local energy function. 
A configuration $\mathbf{s}$ assigns a number from $\{0,1\}$ to each spin. 
Mapping configurations $\mathbf{s}$ to the sum of their local energy contributions $H(\mathbf{s})$ defines the Hamiltonian $H$ of the spin system. 
The Boltzmann distribution, 
\begin{equation}
    p(\mathbf{s}) = \frac{1}{Z}e^{-H(\mathbf{s})},
\end{equation}
is the probability distribution of configurations at thermal equilibrium (at inverse temperature $\beta=1$), where the normalization $Z$ is the partition function.

\emph{Simulations} are transformations of spin systems that preserve the low energy sector. 
Given spin systems $S$ and $T$ (for source and target) with Hamiltonians $H_S$ and $H_T$ and with $V_T \subseteq V_S$ and a cut-off energy $\Delta$, we say that $S$ simulates $T$ with cut-off $\Delta$ (and write $S \to T$) if up to a constant energy shift $\Gamma$ (\cref{fig:sim})
\begin{enumerate}
    \item for each target configuration $\mathbf{t}$ with $H_T(\mathbf{t})\leq \Delta$ there exists a unique configuration $\mathbf{h}_{\mathbf{t}}$ of auxiliary (or hidden) spins $V_S\setminus V_T$ such that the combined source configuration 
    $(\mathbf{t},\mathbf{h}_{\mathbf{t}})$ satisfies $H_S(\mathbf{t},\mathbf{h}_{\mathbf{t}}) = H_T(\mathbf{t})$, and 
    \item all other source configurations have energy at least $\Delta$.
\end{enumerate}
\begin{figure}[t]
    \centering
     \includegraphics{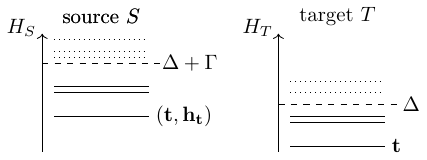}
    \caption{If $S$ simulates $T$,
    $H_S(\mathbf{t},\mathbf{h}_{\mathbf{t}}) = H_T(\mathbf{t})$
    below the cut-off $\Delta$ up to a shift $\Gamma$.}\label{fig:sim}
\end{figure}  

For a simple example consider any spin system $T$ and any spin $i$ from $T$, and construct $S$ by adding a single auxiliary spin $i'$ that interacts with $i$ with a large ferromagnetic coupling $\Delta$. 
Then, for any target configuration $\mathbf{t}$, taking $i'$ to be in the same state as $i$ yields a source configuration with equal energy (up to a shift). 
Conversely, all other source configurations, i.e.\ where $i$ and $i'$ are in different states, yield energies above the cut-off.
This simulation allows one to reduce the number of adjacent spins of $i$ in $T$. 
\begin{equation}\label{eq:example sim}
    \centering
    \includegraphics{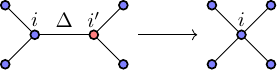}
    \end{equation}
Applying it repeatedly, one can prove that spin systems on 2d square lattices can simulate those on arbitrary planar graphs \cite{Re24}.

We say that $S$ simulates a real-valued function $f$ of spin configurations if it simulates any spin system with Hamiltonian $f$. 
We say that $S$ simulates a probability distribution $p$ over spin configurations on $V_T$ if for all such configurations $\mathbf{t}$, 
\begin{equation}\label{eq:sim preserve Boltzmann}
     p_{V_T}[S](\mathbf{t}) = p(\mathbf{t})+\mathcal{O}(e^{-\Delta}),
\end{equation} 
that is, $p_{V_T}[S]$, the marginal of the source Boltzmann distribution $p[S]$ over physical spins $V_T$, approximates the target distribution $p$ with an error that decreases exponentially with the cut-off. 

Simulations also approximately preserve several properties of spin systems \cite{Re24}, 
illustrating that they can be interpreted as approximations whose error is controlled by the cut-off.
In particular, they preserve Boltzmann distributions (Lemma 2 of \hyperref[sec:supplemental]{the supplemental material}), that is, whenever $S \to T$ then $S$ also simulates the Boltzmann distribution of $T$ (according to \cref{eq:sim preserve Boltzmann}).
Note that the pointwise approximation of \cref{eq:sim preserve Boltzmann} implies approximations in terms of the Kullbach--Leibler divergence and the total variation distance. 
Further, simulations have a modular structure, meaning they can  be combined in several ways, allowing to construct complicated simulations from simpler ones
\footnote{
Simulations can be composed, scaled and added \cite{Re24}. This implies, e.g.\ that they can be constructed from \emph{gadgets}, spin systems that locally replace a  part of the target, similar to \cref{eq:example sim}.}

Simulations are easily promoted to spin models. We say that a spin model $\mc{M}$ simulates a spin system $T$ (or a function $f$ or a probability distribution $p$) 
with cut-off $\Delta$ if there exists a spin system $S \in \mc{M}$ that simulates $T$ (or $f$, or $p$) with cut-off $\Delta$. We say that $\mc{M}$ simulates $T$ (or $f$, or $p$) if this holds for all cut-offs. 
Finally, a spin model is called \emph{universal} if it simulates all spin systems. 
It is called a \emph{universal approximator} (of probability distributions) if it simulates all probability distributions over spin configurations.

Universal spin models can be characterized in terms of seemingly weak properties, implying that fairly simple models like the 2d Ising model with fields or the 3d Ising model are universal \cite{Re24}.
Here we characterize them by flag completeness and closure. $S$ is a \emph{flag system}  for configuration $\mathbf{x}$ if up to an energy shift,
\begin{equation}\label{eq:flag system}
    H_S(\mathbf{s}, \mathbf{h}) = \begin{cases}
        0 \ &\text{if}\  \mathbf{s}= \mathbf{x} \ \text{and} \ \mathbf{h}= \mathbf{h}_{\mathbf{x}} \\
        1 \ &\text{if} \ \mathbf{s}\neq \mathbf{x} \ \text{and} \ \mathbf{h}= \mathbf{h}_{\neq \mathbf{x}} \\
        \geq 1 \ &\text{if} \ \mathbf{s}= \mathbf{x} \ \text{and} \ \mathbf{h}\neq \mathbf{h}_{\mathbf{x}} \\
         \geq 2 \ &\text{if} \ \mathbf{s}\neq \mathbf{x} \ \text{and} \ \mathbf{h}\neq \mathbf{h}_{\neq \mathbf{x}}.
    \end{cases}
\end{equation}
In words, given a fixed configuration $\mathbf{s}$ of physical spins, the energy of $S$ is minimal whenever 
the configuration $\mathbf{h}$ of flag spins
correctly signals whether $\mathbf{s} = \mathbf{x}$ or not, i.e.\ 
if $\mathbf{h}$ equals the truth value of ``$\mathbf{s} = \mathbf{x}$" with configurations $\mathbf{h}_{\mathbf{x}}$ corresponding to ``TRUE", and $\mathbf{h}_{\neq \mathbf{x}}$ corresponding to ``FALSE".
Thus, in the low energy sector, information over the physical spins can be retrieved from the flag-spins, only.  
See \cref{eq:flag RBM} and \cref{eq: flag sys 2} for examples.

A spin model is \emph{flag complete} if it contains flag systems for arbitrary configurations. Additionally, a spin model $\mc{M}$ is \emph{closed} for a subset $\mc{N}\subset \mc{M}$ if it simulates arbitrary non-negative linear combinations from $\mc{N}$. 
\begin{theorem}\label{thm:univ}
    A spin model $\mc{M}$ is universal if and only if  
    \begin{enumerate}
    \item it is flag complete, and 
    \item it is closed for any subset of flag systems with disjoint flag spins.
    \end{enumerate}
\end{theorem}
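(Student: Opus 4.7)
The plan is to split the biconditional. The forward direction is essentially immediate: universality subsumes both conditions, since any non-negative linear combination of flag systems is itself a spin system which a universal $\mc{M}$ must simulate (closure), and since universality in particular yields, for each $\mathbf{x}$, a simulator of the flag system for $\mathbf{x}$ that, at cut-off $\geq 2$, has the low-energy structure required to play the role of a flag system in $\mc{M}$ (flag completeness). I therefore focus on the converse.

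For sufficiency, the idea is to reconstruct the Hamiltonian of an arbitrary target $T$ on spins $V$ as a non-negative linear combination of indicator Hamiltonians $\mathbf{1}[\mathbf{t}\neq\mathbf{x}]$, one per configuration $\mathbf{x}$. After shifting $H$ by a constant so that $H(\mathbf{x})\leq 0$ for every $\mathbf{x}$, the identity
\begin{equation*}
H(\mathbf{t}) = \sum_{\mathbf{x}\in\{0,1\}^V} \lambda_{\mathbf{x}}\,\mathbf{1}[\mathbf{t}\neq\mathbf{x}] + C, \qquad \lambda_{\mathbf{x}} := -H(\mathbf{x}) \geq 0,
\end{equation*}
rewrites $H$ in the desired form up to the harmless constant $C = \sum_{\mathbf{x}} H(\mathbf{x})$. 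Flag completeness provides, for each $\mathbf{x}$, a flag system $F_{\mathbf{x}}\in\mc{M}$; after relabeling, their flag spins are pairwise disjoint. Closure then implies that $\mc{M}$ simulates $\sum_{\mathbf{x}} \lambda_{\mathbf{x}} F_{\mathbf{x}}$, and a direct unpacking shows that this combination in turn simulates $T$: the unique optimal extension of $\mathbf{t}$ assigns $\mathbf{h}_{\mathbf{x}}$ inside the factor $F_{\mathbf{x}}$ when $\mathbf{t}=\mathbf{x}$ and $\mathbf{h}_{\neq\mathbf{x}}$ otherwise, giving energy $\sum_{\mathbf{x}\neq\mathbf{t}}\lambda_{\mathbf{x}} = C_0 + H(\mathbf{t})$ with $C_0=\sum_{\mathbf{x}}\lambda_{\mathbf{x}}$, which matches $H(\mathbf{t})$ up to the constant shift $C_0$. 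Composing the simulations $\mc{M}\to\sum_{\mathbf{x}}\lambda_{\mathbf{x}} F_{\mathbf{x}} \to T$ then closes the argument modulo the cut-off bookkeeping.

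The main obstacle will be making the simulation cut-off arbitrarily large. At face value the energy gap between the optimal and any non-optimal extension in $\sum_{\mathbf{x}}\lambda_{\mathbf{x}} F_{\mathbf{x}}$ is only $\min_{\mathbf{x}}\lambda_{\mathbf{x}}$, which can be small or even zero, and in the latter case the optimal extension is not unique. I resolve this by exploiting the constant-shift freedom of the decomposition: shifting $H$ further by $-c$ uniformly increases every $\lambda_{\mathbf{x}}$ by $c$ while shifting all target energies by the same amount, which is absorbed into the simulation shift $\Gamma$. Choosing $c$ large enough that $\min_{\mathbf{x}}\lambda_{\mathbf{x}} \geq \Delta + \max H - \min H$ simultaneously forces all $\lambda_{\mathbf{x}}$ to be strictly positive (restoring uniqueness of the optimal extension) and pushes every non-optimal source configuration above the prescribed cut-off $\Delta$. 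Since $\Delta$ is arbitrary, $\mc{M}$ simulates $T$ at every cut-off, i.e.\ $\mc{M}$ is universal.
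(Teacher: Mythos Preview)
Your proof is correct and follows essentially the same strategy as the paper: write the target Hamiltonian (after a suitable constant shift) as a non-negative combination of the indicator functions $\mathbf{1}[\mathbf{t}\neq\mathbf{x}]$, realize each indicator by a flag system, and invoke closure together with composition of simulations; your shift by $-c$ to force $\min_{\mathbf{x}}\lambda_{\mathbf{x}}$ large is exactly the paper's choice $\lambda_i=\Delta+f_{\mathrm{diff}}-f(\mathbf{x}_i)$ in disguise. The paper's version differs only in two efficiency refinements you omit---it restricts the sum to configurations below the cut-off rather than all of $\{0,1\}^V$, and it first decomposes $T$ into its local hyperedge terms $T_e$ so that only $\sum_e 2^{|e|}$ rather than $2^{|V_T|}$ flag systems are needed---but neither affects correctness.
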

\begin{proof}
    The ``only if" direction is immediate, since then $\mc{M}$ simulates all spin systems.
    
    We prove the ``if" direction constructively. 
    Without loss of generality we assume that all flag systems have zero energy shift, i.e.\ satisfy \cref{eq:flag system} exactly.
    
    First, we prove that linear combinations of flag systems can simulate an arbitrary function $f$. The idea is that $\lambda S_{\mathbf{x}}$ for large $\lambda$ moves $\mathbf{x}$ to the low energy sector. 
    Thus, the linear combination $S=\sum_i \lambda_i S_{\mathbf{x}_i}$ over all low energy target configurations $f(\mathbf{x}_i)\leq \Delta$, for appropriate $\lambda_i$,
    reproduces the low energy sector of $f$.
    Specifically, we define 
    \begin{equation}
    \begin{split}
        f_{\mr{diff}} &= \mr{max}\{ f(\mathbf{x}_i)-f(\mathbf{x}_j) \mid f(\mathbf{x}_i), f(\mathbf{x}_j)\leq \Delta\}\\
        \lambda_i &= \Delta + f_{\mr{diff}} - f(\mathbf{x}_i) \\
        \Gamma &= \sum_i \lambda_i - (\Delta+f_{\mr{diff}}).
    \end{split}
    \end{equation}
    Then,
    high energy target configurations $\mathbf{y}$ yield $ H_S(\mathbf{y},\mathbf{h}) \geq \Delta + \lambda$, regardless of the configuration of flag spins, $\mathbf{h}$.
    Low energy target configurations $\mathbf{x}_i$, yield 
    $H_S(\mathbf{x}_i,\mathbf{h})=  \Gamma + f(\mathbf{x}_i)$
    if the flag spins flag correctly, and 
    $H_S(\mathbf{x}_i,\mathbf{h}) \geq \Gamma + \Delta$ otherwise. 
    Thus, $S$ simulates $f$ with cut-off $\Delta$ and energy shift $-\Gamma$.

    Now any spin system $T$ can be decomposed into the sum of its local energy functions $T= \sum_e T_e$, 
    each of which can be simulated by a linear combination of flag systems $\sum_{i_e}\lambda_{i_e} S_{\mathbf{x}_{i_e}}$. 
    Additivity of simulations \cite{Re24} then implies that 
    \begin{equation}\label{eq:sim proof thm1}
        \sum_e \sum_{i_e} \lambda_{i_e}  S_{\mathbf{x}_{i_e}} \to \sum_e T_e = T.
    \end{equation}
    Finally, closure and flag completeness imply that $\mc{M}$ simulates the left hand side of \eqref{eq:sim proof thm1}. Since simulations can be composed, this  proves the claim (see Fig. 2 of \cite{Re24} for an illustration). 
\end{proof}
Constructing the simulation of $T$ as a sum of simulations of its local energy functions $T_e$  is more efficient than directly simulating $T$, as the former requires $\mathcal{O}(\sum_e 2^{\vert e \vert})$ instead of $\mathcal{O}(2^{\vert V_T \vert})$ auxiliary spins, which, for typical targets $T$, is polynomially instead of exponentially many.

\emph{Restricted and deep Boltzmann machines}.---
We now apply \cref{thm:univ} to certain machine learning models, more precisely to their underlying spin model. We will prove that for both RBMs and DBMs the underlying spin model is universal. 

For our purposes, we identify RBMs with their underlying spin model. In our eyes, an RBM is a spin system with complete bipartite interaction graph between visible spins $V_v$ and hidden spins $V_h$, with Hamiltonian
\begin{equation}\label{eq:RBM}
    H(\mathbf{v}, \mathbf{h}) = \mathbf{v}^t \mathbf{b}_v + \mathbf{h}^t \mathbf{b}_h + \mathbf{v}^t \mathbf{W} \mathbf{h},
\end{equation}
parameterized by coupling vectors $\mathbf{b}_v, \mathbf{b}_h$ and coupling matrix $\mathbf{W}$, where ${}^t$ denotes transpose. 

Similarly, a DBM is a spin system whose interaction graph consists of fully connected layers $V=\cup_iV_i$, 
such that each subsystem $R_i$ with spins $V_i\cup V_{i+1}$ defines an RBM, with Hamiltonian $H_i$. 
Denoting the configuration of layer $i$ by $\mathbf{h}_i$, 
the total Hamiltonian is given by
\begin{equation}\label{eq: Ham DBM}
    H(\mathbf{h}_0, \ldots, \mathbf{h}_k) 
     = \left( \sum_{i=0}^{k-2} H_i(\mathbf{h}_i,\mathbf{h}_{i+1}) \right)
     + H_{k-1}(\mathbf{h}_{k-1},\mathbf{h}_k).
\end{equation}
We identify $\mathbf{h}_0 = \mathbf{v}$ and call spins from $V_0$ visible spins, 
and those from $V\setminus V_0$ hidden spins.

\begin{theorem}\label{thm:RBM univ}
    RBMs are universal. 
\end{theorem}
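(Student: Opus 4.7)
The plan is to invoke \cref{thm:univ} for the spin model underlying RBMs by verifying its two hypotheses: flag completeness and closure for subsets of flag systems with disjoint flag spins. Because the bipartite form of \cref{eq:RBM} is preserved under taking disjoint unions, closure will follow almost immediately; the real work is to exhibit, for each visible configuration $\mathbf{x}$, an explicit RBM that is a flag system for $\mathbf{x}$.

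For flag completeness I would try the most parsimonious ansatz: a single hidden spin $h$ coupled to the visible spins through the Hamming distance $d(\mathbf{v},\mathbf{x}) = \sum_i (v_i + x_i - 2 v_i x_i)$. Concretely, consider
\begin{equation*}
H(\mathbf{v},h) \;=\; h\bigl(2\,d(\mathbf{v},\mathbf{x}) - 1\bigr),
\end{equation*}
which rearranges into the form of \cref{eq:RBM} with $\mathbf{b}_v = 0$, $b_h = 2|\mathbf{x}|-1$ and $W_i = 2(1-2x_i)$, where $|\mathbf{x}|=\sum_i x_i$. Evaluating the four cases yields $H(\mathbf{x},1) = -1$, $H(\mathbf{x},0) = 0$, $H(\mathbf{s},0) = 0$ for $\mathbf{s}\neq\mathbf{x}$, and $H(\mathbf{s},1) = 2 d(\mathbf{s},\mathbf{x}) - 1 \geq 1$ for $\mathbf{s}\neq \mathbf{x}$. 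A global shift by $+1$ turns these values into $0, 1, 1, \geq 2$, which is exactly the pattern of \cref{eq:flag system} with $\mathbf{h}_\mathbf{x} = 1$ and $\mathbf{h}_{\neq\mathbf{x}} = 0$.

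For closure, I would observe that given a family of RBM flag systems $\{S_i\}$ with common visible spins $V_v$ and pairwise disjoint hidden spins $V_h^{(i)}$, any non-negative combination $\sum_i \lambda_i S_i$ is again bipartite between $V_v$ and $\bigcup_i V_h^{(i)}$: there are no hidden-hidden couplings inside any $S_i$, and none are created by the sum. The combined biases and couplings are simply the (scaled) concatenations of those of the $S_i$, so the sum is itself an RBM, which trivially simulates itself. This verifies the second hypothesis of \cref{thm:univ}, and universality of RBMs follows.

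The main obstacle is the flag-system construction itself, because the restrictive bilinear form $\mathbf{v}^t \mathbf{W} \mathbf{h}$ limits what a single hidden spin can encode. The naive attempt $h\, d(\mathbf{v},\mathbf{x}) - h$ already places $\mathbf{v}=\mathbf{x}$ at the global minimum, but it produces a degenerate minimum for every $\mathbf{v}$ with $d(\mathbf{v},\mathbf{x})=1$ (both $h=0$ and $h=1$ yield the same energy), violating the unique-extension requirement of a simulation. Doubling the Hamming coupling as above breaks precisely this degeneracy while keeping all sub-threshold energies integer; once this observation is in place, the rest of the argument is essentially bookkeeping in the formalism of \cref{thm:univ}.
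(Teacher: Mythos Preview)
Your proof is correct and essentially identical to the paper's: both invoke \cref{thm:univ}, construct the same single-hidden-spin flag RBM with $\mathbf{b}_v=\mathbf{0}$, $b_h=2|\mathbf{x}|-1=2\,\mathrm{d}(\mathbf{x},\mathbf{0})-1$, $W_i=2(1-2x_i)$, and then note that closure is immediate because sums of flag RBMs with disjoint hidden spins remain RBMs. Your explicit energy-table verification and the remark on why the factor $2$ is needed to break the degeneracy at Hamming distance~$1$ are nice additions that the paper leaves implicit.
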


That is, the spin model given by all RBMs can simulate all spin systems.
Moreover, the physical spins of each simulation are the visible spins of the RBM, and the auxiliary spins are the hidden spins. 
In addition, simulating a function with $k$ low energy configurations requires an RBM with $k$ hidden spins. 

\begin{proof}
    We exploit \cref{thm:univ}. 
    Flagging configuration $\mathbf{x}$ can be achieved by an RBM
    \begin{equation}\label{eq:flag RBM}
        \centering
    \includegraphics{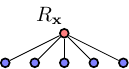}
    \end{equation}
    with a single hidden spin serving as flag spin, and with couplings such that components of $\mathbf{W}$ are given by $W_i=-2$ if $x_i=1$ and $W_i=+2$ if $x_i=0$,
    $\mathbf{b}_v = \mathbf{0}$ and  $\mathbf{b}_h=2\mathrm{d}(\mathbf{x},\mathbf{0})-1$, where $\mr{d}(\cdot,\cdot)$ is the Hamming distance \footnote{We denote by $\mathbf{0}$ the all zero and by $\mathbf{1}$ the all one vector.}.
    It is easy to check that $R_{\mathbf{x}}$ satisfies \cref{eq:flag system} up to a shift of $+1$, with $h_{\mathbf{x}}=+1$ and $h_{\neq \mathbf{x}}=-1$.
  
    Closure is trivial since scaling flag RBMs amounts to changing their couplings, and sums of flag RBMs with disjoint flag spins are trivially RBMs. 
    Following \cref{thm:univ}, the universal RBM is a linear combination of flag RBMs $R_{\mathbf{x}_i}$, that is 
    \begin{equation}
         \centering
    \includegraphics{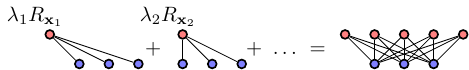}
    \end{equation}
\end{proof}

\begin{theorem}\label{thm:DBM univ}
    DBMs of constant width are universal. 
\end{theorem}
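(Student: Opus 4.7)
My strategy is to apply \cref{thm:univ} to the spin model consisting of all DBMs with hidden-layer width bounded by some fixed constant $c$, mirroring the proof of \cref{thm:RBM univ}. Flag completeness is immediate: the flag RBM $R_\mathbf{x}$ built in \cref{thm:RBM univ} has a single hidden spin, hence is a two-layer DBM of width one, and therefore belongs to the class for any $c \geq 1$. The substantive work is therefore to verify the closure condition.

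The difficulty is that the direct realization of a non-negative linear combination $\sum_i \lambda_i R_{\mathbf{x}_i}$ of flag RBMs on shared visible spins is a two-layer DBM with one hidden spin per flag, so its hidden width grows with the number of flagged configurations rather than remaining bounded. I would sidestep this by spreading the flag spins across depth: build a DBM with visible layer $V_0 = V_T$ and a sequence of hidden layers $V_1, V_2, \ldots$ each of width at most $c$, in which successive layers alternate between ``relay blocks'' and ``flag-checking blocks''. Since a single constant-width layer cannot faithfully transport the $n$-bit visible configuration, each relay block would consist of $O(n/c)$ layers tied together by strong identity-type couplings that reproduce a copy of $\mathbf{v}$ deeper in the network; each subsequent flag-checking layer then implements one term $\lambda_i R_{\mathbf{x}_i}$ acting on the local relayed copy. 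Iterating over the $k$ flag systems yields a DBM of constant width $c$ and depth $O((n+k)/c)$.

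The main obstacle is showing that this construction is energetically correct, i.e.\ that couplings can be chosen so that: (i) any inconsistency between successive relay layers costs at least the cut-off $\Delta$ and is therefore absent from the low-energy sector; (ii) in the consistent regime each flag-checking block contributes exactly the flag-system Hamiltonian of \cref{eq:flag system} for its $\mathbf{x}_i$; and (iii) the total low-energy Hamiltonian of the DBM equals $\sum_i \lambda_i R_{\mathbf{x}_i}$ up to an overall shift $\Gamma$. This is essentially a bookkeeping exercise—showing that relay blocks are themselves flag-like objects that compose correctly with the flag-checking blocks—but it is where all the work lies. Once closure is established, \cref{thm:univ} concludes the proof, exactly as in \cref{thm:RBM univ}.
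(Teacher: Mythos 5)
Your proposal misreads what ``constant width'' means in this theorem, and that misreading makes the intended construction impossible. The paper's claim is that DBMs whose hidden layers all have width $n+1$ (constant \emph{across depth}, where $n=|V_0|$ is the number of visible spins) are universal; it is \emph{not} the much stronger statement that DBMs with width bounded by an absolute constant $c$ independent of $n$ are universal. Your relay-block idea runs into a hard information-theoretic obstruction: in a DBM, layer $V_{i+1}$ interacts with the rest of the network only through the $\leq c$ spins of layer $V_i$. Marginalizing the deeper layers, the minimum source energy over hidden configurations is a minimum of at most $2^c$ affine functions of $\mathbf{v}$, because fixing $\mathbf{h}_1$ fixes the contribution of $H_0(\mathbf{v},\mathbf{h}_1)$ (affine in $\mathbf{v}$) plus a constant. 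For $c<n$ such a min cannot reproduce a generic function on $\{0,1\}^n$ below the cut-off, and in particular it cannot reproduce a single flag Hamiltonian $\lambda\, H_{S_\mathbf{x}}$ together with the uniqueness of the low-energy extension required by the simulation definition. Adding depth does not help: you cannot ``reproduce a copy of $\mathbf{v}$ deeper in the network'' inside a width-$c$ layer when $c<n$.

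The paper avoids this by making the hidden layers wide enough to carry a full copy of the visible configuration. The key gadget is the copy RBM $C$ with $\mathbf{b}_v=\mathbf{b}_h=\mathbf{1}$ and $\mathbf{W}=-2\mathbf{I}$, giving $H_C(\mathbf{v},\mathbf{h})=\mathrm{d}(\mathbf{v},\mathbf{h})$. The flag system for the DBM case is then $F_\mathbf{x}=2C+R_\mathbf{x}$, whose hidden layer has width $n+1$ ($n$ copy spins plus one flag spin). Closure is achieved by chaining: $F_{\mathbf{x}_1}+F_{\mathbf{x}_2}+\cdots$ is simulated by a DBM of constant width $n+1$ and growing depth, where each copy layer faithfully transports $\mathbf{v}$ one level deeper so that the next flag spin can check it. Your identification of flag completeness (the single-hidden-spin $R_\mathbf{x}$ is trivially a DBM) is correct but misses that the useful flag system must carry the copy structure; that is what makes the closure step work.
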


Specifically, the spin model of all DBMs with visible layer of width $n$ and hidden layers of width $n +1$, for any $n$, can simulate all spin systems. 
Again, physical spins of the simulations precisely correspond to visible spins of the DBM.
And simulating a function with $k$ low energy configurations of $n$ spins requires a DBM with $k$ hidden layers, and thus $(n+1)k$ hidden spins.

\begin{proof}
    We exploit \cref{thm:univ}.
    First, define the RBM
    \begin{equation}
        \centering
    \includegraphics{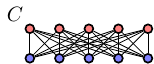}
    \end{equation}
    with $\mathbf{b}_v = \mathbf{b}_h = \mathbf{1}$ and  $\mathbf{W} = -2\: \mathbf{I}$, where $\mathbf{I}$ denotes the identity matrix. Then 
    $H_C(\mathbf{v},\mathbf{h}) = \mathrm{d}(\mathbf{v},\mathbf{h})$.
   In words, in the ground state, $C$  copies configurations from  visible to hidden spins. 
   Defining $F_{\mathbf{x}}$ by
   \begin{equation}\label{eq: flag sys 2}
   \centering
    \includegraphics{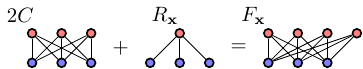}
   \end{equation}
    yields a flag system, i.e.\ satisfies \cref{eq:flag system}.

    Closure follows from simulating 
    the sum of flag systems $F_{\mathbf{x}_1}+F_{\mathbf{x}_2}$ as
    \begin{equation}
        \centering 
    \includegraphics{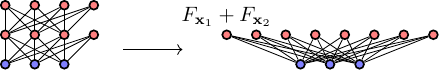}
    \end{equation}
    This simulation works 
    since in the low energy sector, $C$ copies configurations.
    Its cut-off can be controlled by the scaling the copy systems of the left-hand side.
    Note that in contrast to \cref{thm:RBM univ}, closure is non-trivial since sums of flag systems $F_{\mathbf{x}_i}$  do not yield DBMs of constant width.

    Following \cref{thm:univ}, the universal DBM is a simulation of a linear combination of flag systems $F_{\mathbf{x}_i}$, that is, 
    \begin{equation}
         \centering
    \includegraphics{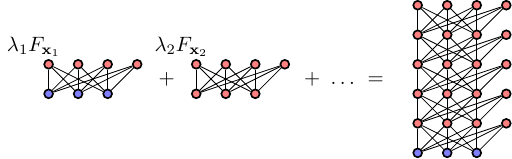}
    \end{equation}

\end{proof}

\emph{Universal spin models are universal approximators}.---
Let us now show that universal spin models are universal approximators of probability distributions. This means that universal spin models, via their Boltzmann distributions, can approximate arbitrary probability distributions over spin configurations to arbitrary precision.

Observe first that any probability distribution $p$ on $\{0,1\}^V$ can be simulated by a spin system $T$.
To this end, construct $T$ such that configurations $\mathbf{t}$ with non-zero probability $p(\mathbf{t})>0$ are assigned low energy, specifically  $-\log(p(\mathbf{t}))$  
and those with zero probability are assigned high energy, specifically $\Delta$. 
Details are provided in Lemma 3 of \hyperref[sec:supplemental]{the supplemental material}.

\begin{theorem}\label{thm:univ approx}
Universal spin models are universal approximators.
\end{theorem}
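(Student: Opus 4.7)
The plan is to chain two approximation steps that are already set up in the excerpt. Given an arbitrary probability distribution $p$ on $\{0,1\}^V$ and a target cut-off $\Delta$, I would first invoke \cref{lem:sim prob canonical} to obtain a spin system $T$ with Gibbs-form Hamiltonian \cref{eq:ham prob}, built using cut-off $\Delta$, that simulates $p$ in the sense of \cref{eq:sim preserve Boltzmann}, so $p[T](\mathbf{t}) = p(\mathbf{t}) + \mathcal{O}(e^{-\Delta})$. Then, since $\mc{M}$ is universal, there exists $S \in \mc{M}$ with $S \to T$ at cut-off $\Delta$ (or any larger value). Applying \cref{lem:sim approx density}, which states that simulations of spin systems preserve Boltzmann distributions, I obtain $p_V[S](\mathbf{t}) = p[T](\mathbf{t}) + \mathcal{O}(e^{-\Delta})$. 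Combining the two estimates via the triangle inequality yields $p_V[S](\mathbf{t}) = p(\mathbf{t}) + \mathcal{O}(e^{-\Delta})$, which is exactly the condition for $\mc{M}$ to simulate $p$ at cut-off $\Delta$. Since $\Delta$ was arbitrary, $\mc{M}$ simulates $p$, and since $p$ was arbitrary, $\mc{M}$ is a universal approximator.

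Structurally, the argument is just an application of the commutative picture one expects: the construction $p \mapsto T$ of \cref{lem:sim prob canonical} converts ``approximate a distribution'' into ``approximate a spin system'', and universality of $\mc{M}$ together with \cref{lem:sim approx density} converts back. No new constructions beyond those already used for \cref{thm:univ} are needed, and the result is essentially a two-line consequence of the supporting lemmas.

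The only real care point — and where I expect the writing effort to go — is bookkeeping of the two independent cut-offs: one controls how well $T$ represents $p$ (the value $\Delta$ appearing in \cref{eq:ham prob}), the other controls how well $S$ simulates $T$. Both errors are of order $e^{-\Delta}$ when each cut-off is chosen to be $\Delta$, so they combine into a single $\mathcal{O}(e^{-\Delta})$ bound; a pedantic alternative is to pick each cut-off as $\Delta + \log 2$ to absorb the factor of two. There is no deeper obstacle, because the modular, composable nature of simulations already discussed in the excerpt does the heavy lifting.
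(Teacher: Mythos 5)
Your proof is correct and follows exactly the paper's own argument: construct $T$ from $p$ via \cref{lem:sim prob canonical}, use universality to obtain $S \to T$, then apply \cref{lem:sim approx density} to conclude. The only difference is that you spell out the triangle inequality and the two-cut-off bookkeeping, which the paper leaves implicit.
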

\begin{proof}
    Consider any probability distribution $p$ and the corresponding spin system $T$ that simulates $p$.
    By assumption a universal spin model can simulate $T$ for arbitrary cut-offs. 
    Simulations preserve Boltzmann distributions (Lemma 2 of \hyperref[sec:supplemental]{the supplemental material}), hence the claim. 
\end{proof}

\begin{corollary}\label{cor:univ approx thm}
    Both RBMs and DBMs 
    are universal approximators. 
\end{corollary}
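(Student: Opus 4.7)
The plan is to obtain the corollary by directly chaining the results already established. By \cref{thm:RBM univ} the spin model of RBMs is universal, and by \cref{thm:DBM univ} the spin model of DBMs of constant width is universal. Invoking \cref{thm:univ approx} then immediately yields that both spin models are universal approximators of probability distributions.

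A small point that I would make explicit is the identification of spins. In both \cref{thm:RBM univ,thm:DBM univ} the simulations are constructed so that the physical spins of the simulation coincide with the visible spins of the machine, while the auxiliary spins coincide with the hidden spins. Thus, given a target distribution $p$ on $\{0,1\}^{V_v}$, I would first invoke the construction of \cref{eq:ham prob} to obtain a spin system $T$ on $V_v$ whose Boltzmann distribution approximates $p$ (this is made rigorous in \cref{lem:sim prob canonical}). Then I would apply \cref{thm:RBM univ} (respectively \cref{thm:DBM univ}) to simulate $T$ by an RBM (respectively a constant-width DBM) whose visible spins are exactly $V_v$, at an arbitrarily large cut-off $\Delta$. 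Finally, \cref{lem:sim approx density}, which is the key ingredient behind \cref{thm:univ approx}, guarantees that the marginal of the Boltzmann distribution of the machine over its visible spins approximates $p$ pointwise with error $\mathcal{O}(e^{-\Delta})$.

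Since every step is an invocation of a previously established result and the constructions of \cref{thm:RBM univ,thm:DBM univ} already place the physical spins on the visible layer, there is essentially no technical obstacle to overcome. The proof is therefore a one-line corollary of \cref{thm:RBM univ,thm:DBM univ,thm:univ approx}, and I would keep the write-up to a sentence or two, perhaps flagging that the pointwise approximation of \cref{eq:sim preserve Boltzmann} implies approximation in Kullback--Leibler divergence and total variation distance, which is the form in which UATs for these models are typically stated in the literature.
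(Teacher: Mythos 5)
Your proof matches the paper's: the corollary is obtained by directly combining \cref{thm:RBM univ}, \cref{thm:DBM univ} and \cref{thm:univ approx}. The extra remarks you add about the physical/visible spin identification and the unwinding of \cref{thm:univ approx} into \cref{lem:sim prob canonical} and \cref{lem:sim approx density} are correct but not needed beyond the one-line citation the paper gives.
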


\begin{proof}
It follows from \cref{thm:RBM univ}, \cref{thm:DBM univ} and  \cref{thm:univ approx}. 
\end{proof}

This is a new proof of the UAT for RBMs and DBMs.
While existing proofs of UATs are derived in a case-by-case fashion
 (cf.\ \cite{LR08,Mo15}), 
ours emanate from an underlying, unified recipe for the derivation for UATs (\cref{thm:univ} and \cref{thm:univ approx}) based on conditions that trigger the onset of universality.   
As such, this work sheds light on the origin and reach of these forms of universality.

\emph{Deep belief networks}.---
Let us extend the characterization of universal approximators to deep Belief Networks (DBNs) and thus beyond energy based models.
Similar to DBMs, DBNs are defined on layerwise fully connected graphs.  However, their output is the marginal of 
\begin{equation}\label{eq:DBN Ham}
    q(\mathbf{h}_0, \ldots, \mathbf{h}_k) = 
    \left( \prod_{i=0}^{k-2} p_i(\mathbf{h}_i \, \vert \,  \mathbf{h}_{i+1})  \right) p_{k-1}(\mathbf{h}_{k-1},\mathbf{h}_k)  
\end{equation}
over visible spins, 
where $p_i(\mathbf{h}_i,\mathbf{h}_{i+1})$ is the Boltzmann distribution of the $i$-th RBM 
 (cf.\ \cref{eq: Ham DBM}).
For $k=2$ ($S=R_0+R_1$) this marginal can be expressed as
\begin{equation}\label{eq:2 DBN}
    q_{V_0}[S](\mathbf{h}_0) = \sum_{\mathbf{h}_1}p[R_0](\mathbf{h}_0 \, \vert \, \mathbf{h}_1) p_{V_1}[R_1](\mathbf{h}_1).
\end{equation}
Following \cite{Mo15}, the corresponding DBM output equals
\begin{equation}\label{eq:2 DBM}
     p_{V_0}[S](\mathbf{h}_0) = \sum_{\mathbf{h}_1}p[R_0](\mathbf{h}_0 \, \vert \, \mathbf{h}_1) (p_{V_1}[R_0]\ast p_{V_1}[R_1])(\mathbf{h}_1),
\end{equation}
where $\ast$ denotes the pointwise, normalized product of probability distributions (see \cite{Mo15}).

Thus, the difference between the DBN and DBM output is the  factor $p_{V_1}[R_0]$, namely the top-level marginal of $R_0$, appearing in the latter. In \cite{Mo15} this factor is termed backwards signal of $R_0$, as it distinguishes the directional information flow of DBNs from the undirected one of DBMs.

Under certain assumptions, the backwards signal of $R_0$ can be absorbed in a change of couplings of $R_1$, yielding a system $S'=R_0+R_1'$ whose DBN output approximates the DBM output of $S$.
More precisely, when the top-level marginal of $R_0$ is \emph{approximately spiked}, i.e.\ up to $\mathcal{O}(e^{-\Delta})$ satisfies
\begin{equation}\label{eq:top level marg assumption}
        p_{V_1}[R_0](\mathbf{h}_1) = \begin{cases}
            a  \ &\text{if} \ \mathbf{h}_1 = \mathbf{x} \\ 
             b \ &\text{else},
        \end{cases}
\end{equation}
then there exist systems $R_1'$ and $S'=R_0+R_1'$ such that 
\begin{equation}\label{eq:dbn dbm sim}
          q_{V_0}[S'](\mathbf{h}_0) = p_{V_0}[S](\mathbf{h}_0) + \mathcal{O}(e^{-\Delta}).
\end{equation}
This is proven in Lemma 4 of \hyperref[sec:supplemental]{the supplemental material}. 

For $k>2$ the difference between DBM and DBN output consist of multiple backward signals. 
In Lemma 5 of \hyperref[sec:supplemental]{the supplemental material}, we prove that under certain assumptions, 
they can be absorbed by applying Lemma 4 of \hyperref[sec:supplemental]{the supplemental material} iteratively. Hence, 
some DBMs can be simulated by DBNs. 
We call such DBMs \emph{effectively directional}.

We now prove that effectively directional DBMs are universal approximators (\cref{thm:univ approx}). Since they can be simulated by DBNs, this entails a UAT for DBNs (\cref{thm:DBN UAT}).
To this end, we consider DBMs whose top-level RBM is a linear combination of two flag RBMs for configurations $\mathbf{z}$ and $\mathbf{x}$, 
\begin{equation}
    \centering
    \includegraphics{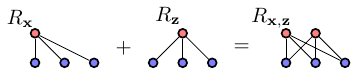}
\end{equation}
When scaled appropriately, $R_{\mathbf{x},\mathbf{z}}$ simulates
\begin{equation}\label{eq:top-level flag Ham}
        f_{\mathbf{x},\mathbf{z}}(\mathbf{v}) =
        \begin{cases}
                    c \ &\text{if} \ \mathbf{v}=\mathbf{x} \\ 
        d \ & \text{if} \  \mathbf{v}=\mathbf{z} \\ 
         \Delta \ & \text{else},
        \end{cases}
\end{equation}
with cut-off $\Delta$.

Each lower level RBM simulates 
\begin{equation}\label{eq:sharing Ham}
    f_{\mathbf{y}_i,\mathbf{x}}(\mathbf{v},\mathbf{h}) = \begin{cases}
         a_i \ &\text{if} \ \mathbf{v}=\mathbf{h} \\ 
        b_i \ & \text{if} \  \mathbf{v}=\mathbf{y}_i \ \text{and} \ \mathbf{h}=\mathbf{x} \\ 
         \Delta \ & \text{else},
    \end{cases}
\end{equation}
with cut-off $\Delta$.
Following \cite{Su08} we call systems that simulate \cref{eq:sharing Ham} \emph{sharing systems} and denote them $S_{\mathbf{y}_i,\mathbf{x}}$. 
Their Boltzmann distribution is such that whenever $\mathbf{h}=\mathbf{x}$ there is a non-zero probability (determined by $a_i$ and $b_i$) to obtain $\mathbf{v}=\mathbf{y}_i$, while all  configurations with $\mathbf{h}\neq \mathbf{x}$ are deterministically copied to $\mathbf{v}=\mathbf{h}$. 
In this sense, $S_{\mathbf{y}_i,\mathbf{x}}$ shares probability mass between $\mathbf{x}$ and $\mathbf{y}_i$.

We construct $S_{\mathbf{y}_i,\mathbf{x}}$ from a copy system, $C$, two flag systems, $R_{\mathbf{y}_i}$ and $R_{\mathbf{x}}$, and an interaction between their flag spins, $E$,
\begin{equation}
    \centering 
    \centering
       \includegraphics{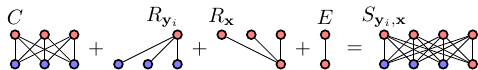}
\end{equation}
In words, $C$ introduces an energy penalty whenever $\mathbf{v} \neq \mathbf{h}$. The flag systems ensure that if $\mathbf{h}=\mathbf{x}$ and $\mathbf{v}=\mathbf{y}_i$, in their low energy sector, both flag spins are in state $1$.
Choosing $E$ appropriately then cancels the energy penalty of $C$ and moves the configuration with $\mathbf{h}=\mathbf{x}$ and $\mathbf{v}=\mathbf{y}_i$ to the overall low energy sector.
Details can be found in Lemma 1 of \hyperref[sec:supplemental]{the supplemental material}.

We now consider DBMs composed of sharing systems and a top level flag RBM.
We call such DBMs \emph{multi-sharing systems}.
\begin{equation}\label{eq:multi-sharing}
    \centering
      \includegraphics{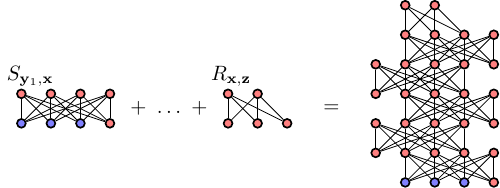}
\end{equation}

In Lemma 5 and Lemma 6 of \hyperref[sec:supplemental]{the supplemental material}, we prove that multi-sharing systems are effectively directional and universal.
This directly implies a UAT for DBNs, and shows that the characterization of universal approximators extends beyond energy based models.
\begin{theorem}\label{thm:DBN UAT}
    DBNs of constant width are universal approximators.
\end{theorem}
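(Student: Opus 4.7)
The plan is to chain the three preceding results. First, I invoke \cref{lem:DBM univ mod} to establish that the spin model consisting of multi-sharing systems is universal; then \cref{thm:univ approx} immediately promotes this to a universal approximator statement for Boltzmann distributions. Concretely, for any target probability distribution $p$ on $\{0,1\}^{V_0}$ and any cut-off $\Delta$, there exists a multi-sharing DBM $S$ whose visible marginal satisfies $p_{V_0}[S](\mathbf{v}) = p(\mathbf{v}) + \mathcal{O}(e^{-\Delta})$.

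Next I apply \cref{lem:DBM-DBN sim} to $S$: since $S$ is effectively directional, there exists a DBN $S'$ on the same layered graph whose output $q_{V_0}[S']$ approximates $p_{V_0}[S]$ up to $\mathcal{O}(e^{-\Delta})$ on every visible configuration. The triangle inequality then yields
\begin{equation}
    q_{V_0}[S'](\mathbf{v}) = p(\mathbf{v}) + \mathcal{O}(e^{-\Delta}),
\end{equation}
which is precisely the universal approximation statement for DBNs.

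For the constant-width claim, I would observe that the multi-sharing system depicted in \cref{eq:multi-sharing} has width bounded by a constant depending only on $|V_0|$ plus the fixed number of extras contributed by each sharing block and the top flag RBM, and crucially independent of the number $k$ of stacked sharing blocks (which scales with the number of low-energy configurations of the target). Because \cref{lem:DBM-DBN sim} iteratively absorbs backwards signals purely through changes in couplings of the top RBM, the resulting DBN $S'$ inherits the layered graph of $S$ and therefore also has constant width.

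The main subtlety is bookkeeping the approximation errors: the iterative application of \cref{lem:DBN DBM 2} inside \cref{lem:DBM-DBN sim} accumulates $\mathcal{O}(e^{-\Delta})$ terms along the depth of the network, so I would need to verify that a single choice of $\Delta$ (large enough to dominate both the universal approximator error of \cref{thm:univ approx} and the stacked DBN-to-DBM errors) simultaneously controls both approximations. Everything beyond this bookkeeping reduces to citing the lemmas.
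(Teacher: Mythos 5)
Your proof follows essentially the same route as the paper's: invoke \cref{lem:DBM univ mod} to establish that multi-sharing systems are universal, promote this to a universal approximation statement via \cref{thm:univ approx}, and then convert the resulting DBM approximation into a DBN approximation via \cref{lem:DBM-DBN sim}, noting that the underlying layered graph (and hence constant width) is preserved under the coupling modifications. Your added observations about width bookkeeping and about $\mathcal{O}(e^{-\Delta})$ error accumulation through the iterated application of \cref{lem:DBN DBM 2} are correct and slightly more explicit than the paper's one-line treatment, but they do not constitute a different argument.
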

Specifically, DBNs with layers of width $n+1$ can simulate arbitrary probability distributions over configurations of $n$ spins.
\begin{proof}
    It follows from Lemma 5 and Lemma 6 of \hyperref[sec:supplemental]{the supplemental material}, and \cref{thm:univ approx}.
\end{proof}

\emph{Outlook}.---
The characterization of universal approximators derived in this work not only unifies existing UATs, but also extends beyond known examples. It may thus guide the construction of universal machine learning models, which could be attractive in two situations. 

First, for ``non-standard" training algorithms. While ``standard" training by contrastive divergence is particularly efficient for bipartite or multipartite architectures \cite{Hi02}, the performance of other training algorithms depends on other properties \cite{So21, Ca25}.
Can we find universal models with the desirable properties?

Second, for non-generic target distributions. 
The characterization of universality implies that certain targets can be efficiently simulated, that is with a polynomial instead of exponential number of parameters. 
Given a set of target distributions, can we devise a universal model that can efficiently represent all distributions from the set? 
Conversely, given a model, what can it efficiently represent?
Even for RBMs and DBMs this is the subject of ongoing research \cite{Ga17, Ca18, Gu20, Ma13}.

Further,
while our results for DBNs extend the characterization of universal approximators beyond energy based models, 
other examples of universal approximators are often based on continuous variables instead of discrete spins.
These include feed forward neural networks (FFNs) \cite{Ho89, Cy89}, graph neural networks \cite{Sa20} and transformers \cite{Yu20}. 
Can our characterization be extended to continuous variables  (cf.\ \cite{De16b}) and thus to some of these examples (cf.\ \cite{Ro10}).

Finally, the mathematical structure underpinning universal spin models is shared with other instances of universality,
including universal Turing machines and universal gate sets from the theory of (quantum) computation \cite{Go24}. 
Since this work puts neural
networks and their feature of universality on an equal footing with these other instances, it raises the following questions.
Do other instances allow for a characterization of universality similar to spin models and neural networks?
How do different instances of universality compare? 
For example, comparing neural networks to Turing machines (or weaker automata) might reveal differences of their computational power \cite{Si95,Pe09} (cf.\ \cite{Re21c}), possibly even shedding light on the distinction between artificial and natural computation \cite{Fr21}.

\emph{Acknowledgements}.---
This research was funded in parts by the Austrian Science Fund 10.55776/Y1261. For open access purposes, the author has applied a CC BY public copyright license to any author accepted manuscript version arising from this submission.


%


\clearpage
\appendix

\crefalias{section}{supplement}
\section*{Supplemental material}\label{sec:supplemental}

\setcounter{theorem}{0}
\setcounter{equation}{0}
\setcounter{figure}{0}
\setcounter{table}{0}
\setcounter{page}{1}

Here we provide details to the results of the main text. 

\begin{lemma}\label{lem:sharing}
Consider the spin system
\begin{equation}
    S_{\mathbf{y}_i,\mathbf{x}} = 2C + \bigl(2\mathrm{d}(\mathbf{y}_i,\mathbf{x})-\lambda_i\bigr) (R_{\mathbf{y}_i}+R_{\mathbf{x}}+E),
\end{equation}
where $C$ is a copy system,  $R_{\mathbf{x}}$ and $R_{\mathbf{y}_i}$ are flag-systems with fields on the flag spins $f_{\mathbf{x}}$ and $f_{\mathbf{y}_i}$ equal to $2\mathrm{d}(\mathbf{x},\mathbf{0})+1$ and $2\mathrm{d}(\mathbf{y}_i,\mathbf{0})+1$, and $E$ is an interaction between $f_{\mathbf{x}}$ and $f_{\mathbf{y}_i}$ with coupling $-3$.
Then, for arbitrary $a_i,b_i$ and $\Delta$, there exist $\delta_i\geq 0$ and $\lambda_i \leq 1$ such that 
$\delta_i S_{\mathbf{y}_i,\mathbf{x}}$ simulates 
\begin{equation}\label{eq:sharing Ham Supp}
    f_{\mathbf{y}_i,\mathbf{x}}(\mathbf{v},\mathbf{h}) = \begin{cases}
         a_i \ &\text{if} \ \mathbf{v}=\mathbf{h} \\ 
        b_i \ & \text{if} \  \mathbf{v}=\mathbf{y}_i \ \text{and} \ \mathbf{h}=\mathbf{x} \\ 
         \Delta \ & \text{else},
    \end{cases}
\end{equation}
i.e.\ yields a sharing system.
\end{lemma}
\begin{proof}
By definition
\begin{widetext}
    \begin{equation}\label{eq:sharing spin system}
    H_{\delta_iS_{\mathbf{y}_i,\mathbf{x}}}(\mathbf{v},\mathbf{h},f_{\mathbf{x}},f_{\mathbf{y}_i}) = 
    \begin{cases}
        \delta_i\lambda_i \ &\text{if} \ \mathbf{v} = \mathbf{y}_i \ \text{and} \ \mathbf{h}=\mathbf{x} \ \text{and} \  f_{\mathbf{x}}=f_{\mathbf{y}_i}=1 \\
        0 \ &\text{if} \ \mathbf{v}=\mathbf{h} \ \text{and} \ f_{\mathbf{x}}=f_{\mathbf{y}_i}=0 \\
        \geq \delta_i \ &\text{else}.
    \end{cases}
\end{equation}
\end{widetext}

We now choose 
$\delta_i = \Delta -b_i$ and $\lambda_i = \tfrac{a_i-b_i}{\Delta - b_i}$.
Without loss of generality we assume that $\Delta$ is large enough such that $\delta_i\geq 0$ and $\lambda_i \leq 1$.
Inserting this into \cref{eq:sharing spin system} proves that $\delta_i S_{\mathbf{y}_i,\mathbf{x}}$ simulates \cref{eq:sharing Ham Supp} with shift $b_i$ and cut-off $\Delta$.
\end{proof}

\begin{lemma}\label{lem:sim approx density}
    Let $S\to T$ with cut-off $\Delta \geq  \max(H_T)$, 
     then for any target configuration $\mathbf{t}$
    \begin{equation}
        p_{V_T}[S](\mathbf{t}) = p[T](\mathbf{t}) + \mc{O}(e^{-\Delta}).\label{eq:probdistr-error}
    \end{equation}
\end{lemma}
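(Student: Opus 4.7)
The plan is to unfold the definition of the simulation and compute both the marginal numerator $\sum_{\mathbf{h}} e^{-H_S(\mathbf{t},\mathbf{h})}$ and the source partition function $Z_S$ by splitting every sum over source configurations into its low-energy part (which matches the target up to the shift $\Gamma$) and a high-energy remainder (bounded by $e^{-\Delta}$ terms).

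First, I would use the assumption $\Delta \geq \max(H_T)$ to conclude that every target configuration $\mathbf{t}$ lies in the low-energy sector, so the simulation yields a unique extension $(\mathbf{t}, \mathbf{h}_{\mathbf{t}})$ with $H_S(\mathbf{t}, \mathbf{h}_{\mathbf{t}}) = H_T(\mathbf{t}) + \Gamma$, while any other extension $(\mathbf{t}, \mathbf{h})$, $\mathbf{h} \neq \mathbf{h}_{\mathbf{t}}$, satisfies $H_S(\mathbf{t},\mathbf{h}) \geq \Delta + \Gamma$. Summing the Boltzmann weights then gives
\begin{equation}
    \sum_{\mathbf{h}} e^{-H_S(\mathbf{t}, \mathbf{h})} = e^{-\Gamma}\bigl(e^{-H_T(\mathbf{t})} + r(\mathbf{t})\bigr),
\end{equation}
where $|r(\mathbf{t})| \leq (2^{|V_S \setminus V_T|}-1)\, e^{-\Delta}$ collects the high-energy contributions. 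Summing further over $\mathbf{t}$ and using the same split on the whole source configuration space, I obtain
\begin{equation}
    Z_S = e^{-\Gamma}\bigl(Z_T + R\bigr), \qquad |R| \leq 2^{|V_T|}(2^{|V_S \setminus V_T|}-1)\,e^{-\Delta}.
\end{equation}

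Taking the ratio, the shift $e^{-\Gamma}$ cancels and I get
\begin{equation}
    p_{V_T}[S](\mathbf{t}) = \frac{e^{-H_T(\mathbf{t})} + r(\mathbf{t})}{Z_T + R}.
\end{equation}
Since $Z_T \geq 1$ (from the zero-energy ground state after shifting, or more generally bounded below by a positive constant depending only on $T$), a standard first-order expansion $\frac{a+\varepsilon_1}{b+\varepsilon_2} = \frac{a}{b} + \mathcal{O}(\varepsilon_1/b) + \mathcal{O}(a\varepsilon_2/b^2)$ applied with $a = e^{-H_T(\mathbf{t})} \leq 1$ and $\varepsilon_1, \varepsilon_2 = \mathcal{O}(e^{-\Delta})$ yields
\begin{equation}
    p_{V_T}[S](\mathbf{t}) = \frac{e^{-H_T(\mathbf{t})}}{Z_T} + \mathcal{O}(e^{-\Delta}) = p[T](\mathbf{t}) + \mathcal{O}(e^{-\Delta}),
\end{equation}
which is the claim.

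The only subtle step is ensuring that the implicit constant hidden in $\mathcal{O}(e^{-\Delta})$ is legitimate, i.e.\ depends only on the sizes $|V_S|$, $|V_T|$ and on $T$ (through a lower bound on $Z_T$), not on $\Delta$. Everything else is a bookkeeping exercise: the shift $\Gamma$ cancels between numerator and denominator, the number of auxiliary-spin configurations is finite, and the $\max(H_T) \leq \Delta$ hypothesis guarantees that no target configuration is mistakenly pushed into the error term.
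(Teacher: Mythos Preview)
Your argument is correct and follows essentially the same route as the paper: split the Boltzmann sums into the unique low-energy extensions (which reproduce $Z_T$ and $e^{-H_T(\mathbf{t})}$) plus high-energy remainders of order $e^{-\Delta}$, then take the ratio. The only cosmetic differences are that the paper sets $\Gamma=0$ without loss of generality instead of carrying it and letting it cancel, and is less explicit about the combinatorial prefactors in the error bounds; your version is slightly more detailed but not different in substance.
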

\begin{proof}
    Since energy shifts do not change Boltzmann distributions, without loss of generality we assume that the shift between $S$ and $T$ is zero. 
Since $S$ simulates $T$ with cut-off $\Delta\geq \max(H_T)$, for each target configuration $\mathbf{t}$ there exists a unique source configuration $(\mathbf{t},\mathbf{h}_{\mathbf{t}})$ with matching energy. Thus,
\begin{equation}\label{eq:split part fun}
        Z_S = \sum_{\mathbf{t}, \mathbf{h}}e^{-H_S(\mathbf{t},\mathbf{h})} = \sum_{\mathbf{t}}e^{-H_T(\mathbf{t})} + \sum_{(\mathbf{t},\mathbf{h}), \mathbf{h}\neq \mathbf{h}_{\mathbf{t}}}e^{-H_S(\mathbf{t},\mathbf{h})}.
\end{equation}
The first term equals $Z_T$ while the second is upper bounded by $e^{-\Delta}$, 
\begin{equation}
    Z_S = Z_T + \mathcal{O}(e^{-\Delta}).
\end{equation}
Next, we split the sum involved in the definition of the marginal $p_{V_T}[S](\mathbf{t})$ into $(\mathbf{t},\mathbf{h}_{\mathbf{t}})$ and configurations with $\mathbf{h}\neq \mathbf{h}_{\mathbf{t}}$, yielding
\begin{equation}
p_{V_T}[S](\mathbf{t}) = \frac{1}{Z_S} \Bigl( e^{-H_T(\mathbf{t})}+ \mathcal{O}(e^{-\Delta})  \Bigr).
\end{equation}
This results in \eqref{eq:probdistr-error}.
\end{proof}

\begin{lemma}\label{lem:sim prob canonical}
    Let $p$ be any probability distribution on $\{0,1\}^V$ and $T$ be a spin system with Hamiltonian
    \begin{equation}\label{eq:ham prob 2}
        H_T(\mathbf{t}) = \begin{cases}
            -\log(p(\mathbf{t})) 
            & \text{if } p(\mathbf{t})>0 \\
            \Delta             & \text{if }  p(\mathbf{t})=0 .
        \end{cases}
\end{equation}
Then $T$ simulates $p$ with cut-off $\Delta$, i.e.\ for any configuration $\mathbf{t}$,
\begin{equation}
    p[T](\mathbf{t}) = p(\mathbf{t}) + \mathcal{O}(e^{-\Delta}).
\end{equation}
\end{lemma}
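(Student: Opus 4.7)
The approach is a direct computation: we just evaluate $p[T](\mathbf{t})=e^{-H_T(\mathbf{t})}/Z_T$ explicitly using the piecewise definition of $H_T$, and show the denominator is $1+\mathcal{O}(e^{-\Delta})$.

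First, I would compute the unnormalized Boltzmann weights. By definition of $H_T$, for configurations in the support of $p$ we have $e^{-H_T(\mathbf{t})}=p(\mathbf{t})$, while for configurations with $p(\mathbf{t})=0$ we have $e^{-H_T(\mathbf{t})}=e^{-\Delta}$. Summing yields
\begin{equation}
Z_T=\sum_{\mathbf{t}:\,p(\mathbf{t})>0}p(\mathbf{t})+\sum_{\mathbf{t}:\,p(\mathbf{t})=0}e^{-\Delta}=1+N_0\,e^{-\Delta},
\end{equation}
where $N_0\leq 2^{\vert V\vert}$ counts the configurations outside the support of $p$. Since $V$ is finite, $N_0$ is a constant (independent of $\Delta$), so $Z_T=1+\mathcal{O}(e^{-\Delta})$.

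Next I would conclude by dividing through. For any $\mathbf{t}$ with $p(\mathbf{t})>0$,
\begin{equation}
p[T](\mathbf{t})=\frac{p(\mathbf{t})}{1+\mathcal{O}(e^{-\Delta})}=p(\mathbf{t})\bigl(1-\mathcal{O}(e^{-\Delta})\bigr)=p(\mathbf{t})+\mathcal{O}(e^{-\Delta}),
\end{equation}
using $p(\mathbf{t})\leq 1$ in the last step so the relative error becomes an absolute one. For $\mathbf{t}$ with $p(\mathbf{t})=0$,
\begin{equation}
p[T](\mathbf{t})=\frac{e^{-\Delta}}{1+\mathcal{O}(e^{-\Delta})}=\mathcal{O}(e^{-\Delta})=p(\mathbf{t})+\mathcal{O}(e^{-\Delta}),
\end{equation}
which establishes the claim in both cases.

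There is no real obstacle here; the lemma is essentially a sanity check that the canonical Hamiltonian \eqref{eq:ham prob 2} does what it advertises. The only subtlety worth flagging in the write-up is the implicit finiteness of $V$ (and hence of $N_0$), which is what lets us absorb the count of zero-probability configurations into the $\mathcal{O}(e^{-\Delta})$ symbol uniformly in $\mathbf{t}$.
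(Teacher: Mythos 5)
Your proof is correct and follows essentially the same route as the paper: compute $Z_T = 1 + \mathcal{O}(e^{-\Delta})$ from the piecewise definition of $H_T$ and the normalization of $p$, then divide through in the Boltzmann distribution. You simply spell out the final division step and the two cases in slightly more detail than the paper, which is fine.
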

\begin{proof}
First, we express the partition function of $T$ as   
\begin{equation}
\begin{split}
    Z_T & 
    = \sum_{\mathbf{t},  p(\mathbf{t})>0} p(\mathbf{t}) + \sum_{\mathbf{t},  p(\mathbf{t})=0}e^{-\Delta} \\
    &= 1 + \mathcal{O}(e^{-\Delta}),
\end{split}
\end{equation}
where the first equality follows by the definition of $H_T$ and the second from the normalization of $p$.
Inserting it in the definition of the Boltzmann distribution of $T$, $p[T]$, proves the claim.
\end{proof}

\begin{lemma}\label{lem:DBN DBM 2}
    Let $S=R_0+R_1$ such that the top-level marginal of $R_0$ is approximately spiked. 
    Then there exists 
    a spin system $S'=R_0+R_1'$ such that 
    \begin{equation}\label{eq:claim DBN DBM}
         q_{V_0}[S'](\mathbf{h}_0) = p_{V_0}[S](\mathbf{h}_0)  + \mathcal{O}(e^{-\Delta}).
    \end{equation}
\end{lemma}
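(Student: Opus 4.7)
\emph{Proof proposal.} The plan is to construct $R_1'$ by adding to $H_{R_1}$ a term depending only on $V_1$ that reweights its top-level marginal to equal the normalized pointwise product $p_{V_1}[R_0] \ast p_{V_1}[R_1]$ up to $\mathcal{O}(e^{-\Delta})$. Comparing \cref{eq:2 DBN} for $S'$ with \cref{eq:2 DBM} for $S$, this is exactly what is needed: both become the same sum over $\mathbf{h}_1$ of $p[R_0](\mathbf{h}_0 \mid \mathbf{h}_1)$ against the same $\mathbf{h}_1$-weight, so \cref{eq:claim DBN DBM} will follow directly.

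The first step is a short calculation: adding a function $E(\mathbf{h}_1)$ to $H_{R_1}$ that depends only on $V_1$ factors out of the partial sum over $V_2$, yielding
\begin{equation}
    p_{V_1}[R_1'](\mathbf{h}_1) = \frac{e^{-E(\mathbf{h}_1)}\, p_{V_1}[R_1](\mathbf{h}_1)}{\sum_{\mathbf{h}_1'} e^{-E(\mathbf{h}_1')}\, p_{V_1}[R_1](\mathbf{h}_1')}.
\end{equation}
Using the approximately-spiked assumption \cref{eq:top level marg assumption}, I would choose $E(\mathbf{x}) = \log(b/a)$ and $E(\mathbf{h}_1) = 0$ otherwise, so that $e^{-E(\mathbf{h}_1)}$ is proportional (up to the constant $b$) to $p_{V_1}[R_0](\mathbf{h}_1)$ up to $\mathcal{O}(e^{-\Delta})$. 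Since this $E$ is a rescaled flag term on $V_1$ for configuration $\mathbf{x}$, it is implementable by a change of couplings of $R_1$, and by construction it only affects configurations with $\mathbf{h}_1 = \mathbf{x}$, matching the property used later in \cref{lem:DBM-DBN sim}.

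Finally I would verify the error bound by substituting this choice into the displayed equation, expanding both numerator and denominator consistently in the $\mathcal{O}(e^{-\Delta})$ deviation coming from \cref{eq:top level marg assumption}, and concluding that $p_{V_1}[R_1'](\mathbf{h}_1) = (p_{V_1}[R_0] \ast p_{V_1}[R_1])(\mathbf{h}_1) + \mathcal{O}(e^{-\Delta})$ pointwise. Inserting this into \cref{eq:2 DBN} for $S'$ and comparing against \cref{eq:2 DBM} for $S$ then gives \cref{eq:claim DBN DBM}, because $p[R_0](\mathbf{h}_0 \mid \mathbf{h}_1) \leq 1$ and the sum is over a finite configuration space, so pointwise errors of order $e^{-\Delta}$ survive summation at the same order. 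The main technical subtlety is bookkeeping the normalization $N$ in the pointwise product: since $N$ itself involves the approximately-spiked $p_{V_1}[R_0]$, one must expand numerator and denominator together, rather than separately, to ensure the errors do not inadvertently get amplified beyond $\mathcal{O}(e^{-\Delta})$.
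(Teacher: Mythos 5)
Your proof is correct and takes the same approach as the paper's: both modify $R_1$ by subtracting a constant $c$ from its Hamiltonian on the slice $\mathbf{h}_1=\mathbf{x}$, show that the resulting top-level marginal equals the normalized pointwise product $p_{V_1}[R_0]\ast p_{V_1}[R_1]$ up to $\mathcal{O}(e^{-\Delta})$, and then conclude via \cref{eq:2 DBN} and \cref{eq:2 DBM}. Your derivation of the shift is cleaner — observing that choosing $e^{-E}\propto p_{V_1}[R_0]$ lets the normalization absorb the proportionality constant automatically, rather than solving for $c$ through the paper's explicit partition-function algebra — and indeed your $c=\log(a/b)$ coincides with the paper's choice in \cref{eq:choice f} after simplification; note only that your justification ``a flag term on $V_1$ is implementable by a coupling change of $R_1$'' is not generically true for an RBM, but it is immaterial here since the lemma only requires $R_1'$ to be a spin system, with the implementability question deferred to \cref{lem:DBM-DBN sim} in the paper as well.
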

\begin{proof}
By assumption the top-level marginal of $R_0$ satisfies
\begin{equation}
        p_{V_1}[R_0](\mathbf{h}_1) = \begin{cases}
            a  \ &\text{if} \ \mathbf{h}_1 = \mathbf{x} \\ 
             b \ &\text{else},
        \end{cases}
\end{equation}
for some configuration $\mathbf{x}$ and real numbers $a,b$, up to $\mathcal{O}(e^{-\Delta})$.
We construct $R_1'$ by changing couplings of $R_1$ such that below $\Delta$ 
\begin{equation}\label{eq:change top RBM}
        H_{R_1'}(\mathbf{h}_1,\mathbf{h}_2) = 
        \begin{cases}
            H_{R_1}(\mathbf{h}_1,\mathbf{h}_2)- c \ &\text{if} \ \mathbf{h}_1 = \mathbf{x} \\ 
             H_{R_1}(\mathbf{h}_1,\mathbf{h}_2) \ &\text{else}.
        \end{cases}
\end{equation}
We now prove that an appropriate choice of $c$ (see \cref{eq:choice f}) leads to 
\begin{equation}\label{eq:what to prove}
        p_{V_1}[R_1'](\mathbf{h}_1) = (p_{V_1}[R_0]\ast p_{V_1}[R_1])(\mathbf{h}_1) + \mathcal{O}(e^{-\Delta}).
\end{equation}
Since 
\begin{equation}
    q_{V_0}[S](\mathbf{h}_0) = \sum_{\mathbf{h}_1}p[R_0](\mathbf{h}_0 \, \vert \, \mathbf{h}_1) p_{V_1}[R_1](\mathbf{h}_1).
\end{equation}
and 
\begin{equation}
     p_{V_0}[S](\mathbf{h}_0) = \sum_{\mathbf{h}_1}p[R_0](\mathbf{h}_0 \, \vert \, \mathbf{h}_1) (p_{V_1}[R_0]\ast p_{V_1}[R_1])(\mathbf{h}_1),
\end{equation}
this proves the claim.

First, by definition of $R_1'$, up to $\mathcal{O}(e^{-\Delta})$, 
    \begin{equation}\label{eq:change top level}
        p_{V_1}[R_1'](\mathbf{h}_1) = \begin{cases}
           d  p_{V_1}[R_1](\mathbf{h}_1) \ & \text{if} \ \mathbf{h}_1=\mathbf{x} \\ 
            d' p_{V_1}[R_1](\mathbf{h}_1)   \ & \text {else}
    \end{cases}
\end{equation}
where 
$d=\tfrac{Z}{Z'}e^c$ and $d'=\tfrac{Z}{Z'}$, with $Z$ and $Z'$ the partition functions of $R_1$ and $R_1'$ respectively. 
We obtain
\begin{equation}
    Z' = Z + (e^c-1) Z p_{V_1}[R_1](\mathbf{x}).
\end{equation}

Crucially, as long as
\begin{equation}
    0\leq d p_{V_1}[R_1](\mathbf{x})<1,
\end{equation}
any value of $d$ can be achieved by an appropriate choice of $c$, namely 
\begin{equation}
    c = \log \left( \frac{d-dp_{V_1}[R_1](\mathbf{x})}{1-dp_{V_1}[R_1](\mathbf{x})}\right).
\end{equation}
We choose $c$ such that 
\begin{equation}\label{eq:choice f}
    d = \frac{a}{b(1-p_{V_1}[R_1](\mathbf{x}))+a p_{V_1}[R_1](\mathbf{x})}.
\end{equation}

We first rewrite the normalization constant of $p_{V_1}[R_0]\ast p_{V_1}[R_1]$ as 
\begin{equation}
\begin{split}
        N &= \sum_{\mathbf{h}_1}p_{V_1}[R_0](\mathbf{h}_1) p_{V_1}[R_1](\mathbf{h}_1) \\
        &= b\sum_{\mathbf{h}_1\neq \mathbf{x}} p_{V_1}[R_1](\mathbf{h}_1) + a p_{V_1}[R_1](\mathbf{x}) + \mathcal{O}(e^{-\Delta}) \\
        &= \frac{a}{d} + \mathcal{O}(e^{-\Delta}),
\end{split}
\end{equation}
where the second equality holds since the top-level marginal of $R_0$ is approximately spiked, and the third because  $p_{V_1}[R_1]$ is normalized.
It follows that 
\begin{equation}
    (p_{V_1}[R_0]\ast p_{V_1}[R_1])(\mathbf{x}) = d p_{V_1}[R_1](\mathbf{x}) + \mathcal{O}(e^{-\Delta})
\end{equation}
which proves \cref{eq:what to prove} for the case $\mathbf{h}_1=\mathbf{x}$.

Finally, normalization of $p_{V_1}[R_1']$ implies
\begin{equation}
    1 = d p_{V_1}[R_1](\mathbf{x}) + d'\sum_{\mathbf{h}_1 \neq \mathbf{x}}p_{V_1}[R_1](\mathbf{h}_1).
\end{equation}
Normalization of $p_{V_1}[R_0]\ast p_{V_1}[R_1]$ yields 
\begin{equation}
    1 = d p_{V_1}[R_1](\mathbf{x}) + \frac{d}{a} b \sum_{\mathbf{h}_1 \neq \mathbf{x}}p_{V_1}[R_1](\mathbf{h}_1) + \mathcal{O}(e^{-\Delta}).
\end{equation}
Thus, up to $\mathcal{O}(e^{-\Delta})$, 
$d'=\frac{d}{a}b$. This proves \cref{eq:what to prove} for the case $\mathbf{h}_1\neq\mathbf{x}$.
\end{proof}

\begin{lemma}\label{lem:DBM-DBN sim}
    Multi-sharing systems 
    are effectively directional.
\end{lemma}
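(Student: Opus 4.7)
The strategy is to apply \cref{lem:DBN DBM 2} iteratively from the bottom of the multi-sharing system upward, absorbing each backwards signal into a coupling modification of the next sharing system above. Write the multi-sharing system as $S = S_1 + S_2 + \ldots + S_{k-1} + R_{\mathbf{x},\mathbf{z}}$, where $S_i = S_{\mathbf{y}_i,\mathbf{x}}$ is the sharing system between layers $V_{i-1}$ and $V_i$ and $R_{\mathbf{x},\mathbf{z}}$ is the top flag RBM between $V_{k-1}$ and $V_k$. The top RBM produces no backwards signal, so only the $k-1$ sharing systems need to be treated.

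First I verify the hypothesis of \cref{lem:DBN DBM 2} for each sharing system, namely that its top-level marginal is approximately spiked on $\mathbf{x}$. By \cref{eq:sharing Ham}, the low-energy configurations of $S_{\mathbf{y}_i,\mathbf{x}}$ are either $(\mathbf{v},\mathbf{h})$ with $\mathbf{v}=\mathbf{h}$ (energy $a_i$) or the sharing pair $(\mathbf{v},\mathbf{h})=(\mathbf{y}_i,\mathbf{x})$ (energy $b_i$). Marginalizing over the visible layer $\mathbf{v}$ gives $p_{V_i}[S_{\mathbf{y}_i,\mathbf{x}}](\mathbf{h}) \propto e^{-a_i}+e^{-b_i}$ when $\mathbf{h}=\mathbf{x}$ (both types contribute) and $\propto e^{-a_i}$ when $\mathbf{h}\neq\mathbf{x}$ (only the copy type contributes), up to $\mathcal{O}(e^{-\Delta})$ from high-energy configurations. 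This matches the form required by \cref{eq:top level marg assumption}.

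I then apply \cref{lem:DBN DBM 2} at the bottom to absorb the backwards signal of $S_1$ into a modification $S_2 \to S_2'$, iterate upward through $S_3,\ldots,S_{k-1}$, and show at each step that the modified RBM again has an approximately spiked top-level marginal on $\mathbf{x}$, so that the next application of the lemma remains valid. The absorption adds a two-valued bias to the visible layer of the next sharing system that distinguishes configurations equal to $\mathbf{x}$ from all others; combined with the three-group structure of the low-energy configurations of the sharing system (namely $(\mathbf{x},\mathbf{x})$, other copies $(\mathbf{v},\mathbf{v})$ with $\mathbf{v}\neq\mathbf{x}$, and the sharing pair $(\mathbf{y}_i,\mathbf{x})$ which has $\mathbf{v}=\mathbf{y}_i\neq\mathbf{x}$), the modification merely shifts the parameters $(a_i,b_i)$ to new values $(a_i',b_i')$ and preserves the sharing-like pattern. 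Hence the marginal computation of the previous step reapplies. Summing over the $k-2$ absorption steps, the accumulated error remains $\mathcal{O}(k\, e^{-\Delta})$, which is controlled by the cut-off, and the top RBM $R_{\mathbf{x},\mathbf{z}}$ is used unchanged as the top-level RBM of the resulting DBN. The main obstacle is precisely this preservation of spikedness under the iterated coupling modifications, which hinges on the fact that the absorbed biases are themselves two-valued on $\mathbf{x}$ versus non-$\mathbf{x}$ and therefore commute with the $\mathbf{x}$-centered sharing structure.
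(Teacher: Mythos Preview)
Your overall plan coincides with the paper's: peel off the bottom sharing system, check that its top-level marginal is approximately spiked on $\mathbf{x}$ (the paper simply asserts this; your explicit computation is correct), invoke \cref{lem:DBN DBM 2}, and iterate upward. The paper differs in one detail: it argues that because the modification of \cref{eq:change top RBM} only touches configurations with $\mathbf{h}_1=\mathbf{x}$, and in the low-energy sector of the remaining multi-sharing system $W$ these force $\mathbf{h}_j=\mathbf{x}$ for all $j\ge 1$, the correction can be pushed all the way into the top flag RBM $R_{\mathbf{x},\mathbf{z}}$ while leaving every sharing system untouched. That makes spikedness automatic at each step. Your route---absorbing into the next sharing system and re-verifying spikedness---is an equally valid bookkeeping choice.

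Two inaccuracies to fix. First, the absorbed bias $-c\,\mathbb{1}[\mathbf{h}_1=\mathbf{x}]$ does not simply send $(a_i,b_i)$ to new $(a_i',b_i')$: it lowers the energy of the copy pair $(\mathbf{x},\mathbf{x})$ relative to the other copy pairs $(\mathbf{v},\mathbf{v})$ with $\mathbf{v}\neq\mathbf{x}$, so the modified block is no longer of the form \cref{eq:sharing Ham}. Your spikedness conclusion survives anyway, since the marginal over $\mathbf{h}$ is still constant on all $\mathbf{h}\neq\mathbf{x}$. Second, with $k-1$ sharing systems there are $k-1$ backwards signals, not $k-2$; absorbing into $S_2,\ldots,S_{k-1}$ handles the first $k-2$, and the last one (from the modified $S_{k-1}$) must go into $R_{\mathbf{x},\mathbf{z}}$. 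Hence $R_{\mathbf{x},\mathbf{z}}$ cannot be used unchanged in the resulting DBN.
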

\begin{proof} 
We decompose a multi-sharing system $S$ into the sum of its bottom most sharing system $S_{\mathbf{y}_1,\mathbf{x}}$ and the remaining system
\begin{equation}
W =S_{\mathbf{y}_2,\mathbf{x}}+\ldots + R_{\mathbf{x},\mathbf{z}}.
\end{equation}
It is easy to see that sharing systems have approximately spiked top-level marginals. 
We hence apply \cref{lem:DBN DBM 2} to $S= S_{\mathbf{y}_1,\mathbf{x}} +W$, yielding a system $W'$ that absorbs the backwards signal of $S_{\mathbf{y}_1,\mathbf{x}}$. 
By \cref{eq:change top RBM}, $W$ and $W'$ only differ on configurations with $\mathbf{h}_1=\mathbf{x}$. 
This modification can be achieved by only changing couplings of $R_{\mathbf{x},\mathbf{z}}$,
i.e. 
\begin{equation}
W'=S_{\mathbf{y}_2,\mathbf{x}}+\ldots + R_{\mathbf{x},\mathbf{z}}'. 
\end{equation}
In particular also $W'$ is a multi-sharing system and we can 
apply \cref{lem:DBN DBM 2} to the split of $W $ into $S_{\mathbf{y}_2,\mathbf{x}}$ and $S_{\mathbf{y}_3,\mathbf{x}}+\ldots+R_{\mathbf{x},\mathbf{z}}'$, yielding a system 
\begin{equation}
W''=S_{\mathbf{y}_3,\mathbf{x}}+\ldots+R_{\mathbf{x},\mathbf{z}}''
\end{equation}
that absorbs the backwards signal of $S_{\mathbf{y}_2,\mathbf{x}}$.
Proceeding iteratively proves the claim.
\end{proof}

\begin{lemma}\label{lem:DBM univ mod}
    The spin model consisting of multi-sharing systems is universal.
\end{lemma}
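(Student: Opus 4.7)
The plan is to show directly that any function $f$ on $\{0,1\}^n$ can be simulated by a suitable multi-sharing system. Since the multi-sharing class is not obviously closed under addition, I will not route through \cref{thm:univ}'s decomposition step; instead I will exhibit a single multi-sharing system whose low-energy sector matches the target energy landscape on the visible layer, which then simulates any spin system with Hamiltonian $f$.

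Given $f$ and a cut-off $\Delta$, I enumerate the low-energy visible configurations $\mathbf{x}^{(0)},\ldots,\mathbf{x}^{(N)}$ with energies $E_j:=f(\mathbf{x}^{(j)})\leq\Delta$. I will designate the switch configuration of the multi-sharing stack as $\mathbf{x}:=\mathbf{x}^{(0)}$, the top-level target as $\mathbf{z}:=\mathbf{x}^{(1)}$, and $\mathbf{y}_i:=\mathbf{x}^{(i+1)}$ for $i=1,\ldots,N-1$, so the stack has $k=N-1$ sharing layers below the top-level flag RBM $R_{\mathbf{x},\mathbf{z}}$. Tracing the rigid low-energy behavior of sharing systems through the stack, I expect to find that the only stack configurations below the cut-off are: all layers equal to $\mathbf{z}$ (visible $\mathbf{z}$), all layers equal to $\mathbf{x}$ (visible $\mathbf{x}$), or a ``switch at level $i$'' pattern in which layers $0,\ldots,i-1$ equal $\mathbf{y}_i$ and layers $i,\ldots,k$ equal $\mathbf{x}$ (visible $\mathbf{y}_i$). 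Summing the per-level contributions of \cref{eq:sharing Ham} and \cref{eq:top-level flag Ham} yields total energies $\sum_j a_j + c$, $\sum_j a_j + d$, and $\sum_j a_j + (b_i - a_i) + c$ respectively. Choosing $d-c=E_1-E_0$ and $b_i-a_i=E_{i+1}-E_0$ then matches the target energies up to a common shift $\Gamma$, which will play the role of the simulation's energy shift.

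The step I expect to be most delicate is controlling the high-energy sector. Any configuration outside the rigid patterns above triggers at least one ``else'' branch in a constituent sharing system or in $R_{\mathbf{x},\mathbf{z}}$, each contributing its local cut-off to the total energy. To guarantee that the global cut-off of the multi-sharing system reaches $\Delta+\Gamma$, I will scale up the constituent copy, flag and interaction systems inside each $S_{\mathbf{y}_i,\mathbf{x}}$ and inside $R_{\mathbf{x},\mathbf{z}}$ appropriately; this freedom is ensured by \cref{lem:sharing}, which realizes the required parameters $a_i,b_i,c,d$ at any scale, and by the fact that multi-sharing systems are by construction linear combinations of these constituents. Once both the low- and high-energy sectors are under control, the constructed multi-sharing system simulates $f$, hence any spin system with Hamiltonian $f$, establishing the claim.
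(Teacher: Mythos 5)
Your proposal is correct and takes essentially the same approach as the paper. You directly construct a multi-sharing system simulating a generic function, identify $\mathbf{x}$, $\mathbf{z}$, $\mathbf{y}_i$ with the low-energy configurations, and choose the sharing parameters so that the rigid low-energy patterns (all-$\mathbf{x}$, all-$\mathbf{z}$, switch-at-$i$) reproduce the target energies up to a shift---this is precisely the paper's construction culminating in \cref{eq:ham sharing S}, with your difference conditions $d-c=E_1-E_0$, $b_i-a_i=E_{i+1}-E_0$ matching the paper's specific choice $a_i=0$, $c=g(\mathbf{v}_1)$, $d=g(\mathbf{v}_2)$, $b_j=g(\mathbf{v}_{j+2})-g(\mathbf{v}_1)$, and the high-energy control you flag being handled by \cref{lem:sharing} and the additivity of simulations.
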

More precisely, we prove that multi-sharing systems can simulate arbitrary target systems such that physical spins correspond to visible spins.
\begin{proof}
    We construct a simulation of a generic function $g\colon \{0,1\}^{V_0} \to \mathbb{R}$ such that visible spins of the multi-sharing system correspond to physical spins. 
    Universality follows from choosing $g$ to be the Hamiltonian of a generic target system $T$. As noted before, it may be more efficient to construct the simulation of $T$ by adding simulations of all of its local energy functions  as opposed to directly simulating $H_T$. 

    Recall that $S_{\mathbf{y}_i,\mathbf{x}}$ simulates $f_{\mathbf{y}_i,\mathbf{x}}$ defined in \cref{eq:sharing Ham Supp}, and  $R_{\mathbf{x},\mathbf{z}}$ simulates 
    \begin{equation}\label{eq:top-level flag Ham Supp}
        f_{\mathbf{x},\mathbf{z}}(\mathbf{v}) =
        \begin{cases}
                    c \ &\text{if} \ \mathbf{v}=\mathbf{x} \\ 
        d \ & \text{if} \  \mathbf{v}=\mathbf{z} \\ 
         \Delta \ & \text{else}.
        \end{cases}
\end{equation}
    Adding them  implies that 
    $S = S_{\mathbf{y}_1,\mathbf{x}}+\ldots+S_{\mathbf{y}_k,\mathbf{x}}+R_{\mathbf{x},\mathbf{z}}$ simulates 
    \begin{widetext}
\begin{equation}\label{eq:ham sharing S}
    f(\mathbf{h}_0, \ldots, \mathbf{h}_k) = \begin{cases}
        c + \sum_i a_i\ & \text{if} \ \mathbf{h}_0=\ldots = \mathbf{h}_k=\mathbf{x}\\
        d + \sum_i a_i \ &\text{if} \  \mathbf{h}_0= \ldots =\mathbf{h}_k=\mathbf{z} \\
        c +  b_j+\sum_{i\neq j} a_i \ & \text{if} \ \mathbf{h}_0=\ldots =\mathbf{h}_{j-1}=\mathbf{y}_j \ \text{and} \ \mathbf{h}_j= \ldots = \mathbf{h}_k=\mathbf{x}\\
        \Delta \ &\text{else}. 
    \end{cases}
\end{equation}
\end{widetext}

    We now enumerate low-energy configurations of the target $g$ (i.e.\ configurations with $g(\mathbf{v})\leq \Delta$). 
    We identify the first one, $\mathbf{v}_1$, with $\mathbf{x}$, 
    the second one, $\mathbf{v}_2$, with $\mathbf{z}$, and the remaining ones such that $\mathbf{v}_{j+2}$ corresponds to $\mathbf{y}_j$. 
    Since parameters $c,d,a_j,b_j$ can be chosen arbitrarily via the scalings and couplings of $S_{\mathbf{y}_i,\mathbf{x}}$ and $R_{\mathbf{x},\mathbf{z}}$,  
    we choose $a_i = 0$, $c = g(\mathbf{v}_1)$, $d =  g(\mathbf{v}_2)$ and $b_j=g(\mathbf{v}_{j+2})-g(\mathbf{v}_1)$.
    Inserting this into \cref{eq:ham sharing S}
    proves that $S$ simulates $g$ with cut-off $\Delta$.  
\end{proof}

\end{document}